\newcommand{\lc}{\mathrm{len}}
\newcommand{\vol}{\mathrm{vol}}
\newcommand{\work}{\ensuremath{\mathrm{work}}}
\newcommand{\awork}{\ensuremath{\mathrm{\hat{w}}}}
\newcommand{\vcom}[1]{}
\newcommand{\awcom}[1]{}
\begin{document}

\title{Feasibility Tests for Recurrent Real-Time Tasks in the Sporadic DAG Model}
\author{%
Vincenzo Bonifaci\inst{1} \and
Alberto Marchetti-Spaccamela\inst{2} \and \\
Sebastian Stiller\inst{3} \and
Andreas Wiese\inst{4}}
 \institute{IASI-CNR, Rome, Italy \\
 \and
 Sapienza Universit\`{a} di Roma, Italy \\
 \and
 TU Berlin, Germany \\
 \and 
 Max Planck Institute for Informatics, Saarbr\"ucken, Germany
}%

\maketitle

\newcommand{\gsys}{\ensuremath{\mathcal{G}}}
\newcommand{\tsys}{\ensuremath{\mathcal{T}}}
\newcommand{\sinf}{\ensuremath{S_{\infty}}}

\begin{abstract}
A model has been proposed in \cite{Baruah:2012:a} for representing recurrent precedence-constrained tasks to be executed on multiprocessor platforms, where each recurrent task is modeled by a directed acyclic graph (DAG), a period, and a relative deadline. Each vertex of the DAG represents a sequential job, while the edges of the DAG represent precedence constraints between these jobs.  All the jobs of the DAG are released simultaneously and have to be completed within some specified relative deadline.  The task may release jobs in this manner an unbounded number of times, with successive releases occurring at least the specified period apart. The feasibility problem is to determine whether such a recurrent task can be scheduled to always meet all deadlines on a specified number of dedicated processors.

The case of a single task has been considered in \cite{Baruah:2012:a}. The main contribution of this paper is to consider the case of multiple tasks. We show that EDF has a speedup bound of $2-1/m$, where $m$ is the number of processors. Moreover, we present polynomial and pseudopolynomial schedulability tests, of differing effectiveness, for determining whether a set of sporadic DAG tasks can be scheduled by EDF to meet all deadlines on a specified number of processors.
\end{abstract}

\section{Introduction}
\label{sec:intro}

\vcom{I changed the intro here and there (not much)}

The sporadic task model is a well-known model to represent real-time systems based on  a finite number of independent recurrent processes or tasks, each of which may generate an unbounded sequence of jobs. 
Determining how multiple recurrent tasks can be scheduled on a shared uni- or multi-processor platform is one of the traditional objects of study in real-time scheduling theory. 
Different formal models have been proposed for representing such recurrent tasks; 
these models differ from one another in the restrictions they place on the jobs that may be generated by a single task (see, for example, \cite{Dertouzos:1974,Liu:1969:b,Liu:1969:a,Liu:1973,Stigge:2011:a}).

The technological evolution of  processor manufacturing is moving  away from increasing clock frequencies to increasing the number of cores per processor.  This is a continuing trend, with no immediate end in sight.
The presence of large core-counts offers new opportunities for executing more computation-intensive workloads in real time. 
It is still unclear how the resulting massively parallel multicore CPUs will be structured. For example, it is not clear whether all the cores will be identical, or there will be different specialized  cores to realize different functions, and/or   whether some cores will  be dedicated to certain functionalities, with the rest being general-purpose processors.

These and similar questions are still unanswered;  however, it seems likely that in the near future an execution environment will allow for the possibility of having more expressive task models than the relatively simple recurrent task models considered thus far in the real-time scheduling literature. We refer to 
~\cite{Saifullah:2011,Stigge:2011:a,Stigge:2011:b} and to references therein for a thorough discussion of the models. 
We observe that an important characteristic of the more expressive models is to allow for partial parallelism within a task, as well as for precedence constraints between different parts of the task.  
%

In this paper, we continue the study of a parallel task model, the {\em sporadic DAG model\/}, that was introduced in \cite{Baruah:2012:a} and that  considers   the preemptive scheduling of a recurrent task. The task is modeled as a directed acyclic graph (DAG) $G=(V,E)$ which executes upon a platform consisting of $m$ identical processors that are dedicated to the exclusive use of this particular task. The task repeatedly emits a \emph{dag-job}, which is a set of precedence-constrained sequential jobs.

More precisely, in \cite{Baruah:2012:a} each vertex $v\in V$  of the DAG corresponds to a sequential job, and is characterized by a worst-case execution time (WCET) $p_v$.   Each (directed) edge of the DAG represents a precedence constraint:  if $(v,w) \in E$ is a (directed) edge in the DAG, then the job corresponding to vertex $v$ must complete execution before the job corresponding to vertex $w$ may begin execution.  Groups of jobs that are not constrained (directly or indirectly) by precedence constraints in such a manner may execute in parallel if there are processors available for them to do so.

When a dag-job is released by the task, it is assumed that all $|V|$ of the corresponding jobs become available for execution simultaneously, subject to the precedence constraints.  During any given run the task may release an unbounded sequence of dag-jobs; all $|V|$ jobs that are released at some time-instant $t$ must complete execution by time-instant $t+D$, where $D$ is the (relative) deadline parameter of the task.  A minimum interval of duration $T$ must elapse between successive releases of dag-jobs, where $T$ is the period of the task. 

\paragraph{Previous results.}
It is  known~\cite{Ullman:1975} that the preemptive scheduling of a given collection of precedence-constrained jobs (i.e.,  a DAG) on a multiprocessor platform is NP-hard in the strong sense; this intractability result is easily seen to hold for the sporadic DAG model as well.  

Much of the research described in  \cite{Baruah:2012:a} is  concerned with dealing with the case $D>T$, which in the case of a single DAG is the more interesting case.   First it is shown that the ``synchronous arrival sequence", in which successive dag-jobs are released exactly the period $T$ time-units apart, does not necessarily correspond to the worst-case behavior of a  sporadic DAG task; hence, we cannot determine schedulability properties by simply studying this one behavior of the task.  
Furthermore, \cite{Baruah:2012:a} also considers the Earliest Deadline First (EDF) scheduling~\cite{Liu:1973,Dertouzos:1974} of a sporadic DAG task on identical multiprocessors. It is shown that EDF has a \emph{speedup bound} (this metric is formally defined in Section~\ref{sec:defns}) no larger than 2 for scheduling a sporadic DAG task.  The paper also presents two different schedulability tests for determining whether EDF can schedule a given sporadic DAG task upon a specified identical multiprocessor to meet all deadlines.  These tests have different run-time complexity --- one has polynomial run-time while the other has run-time pseudopolynomial in the representation of the task --- and effectiveness (as quantified, again, by the {speedup bound\/} metric).

\paragraph{This paper.}
The main  limitation of \cite{Baruah:2012:a} is that a single DAG task is considered. The major contribution of this paper is to consider the case of  multiple tasks, where each task is specified by a different DAG. 

The remainder of this paper is organized as follows.  In Section~\ref{sec:defns}, we formally define the notation and terminology used in describing our task model.  We also formalize the concepts of feasibility, schedulability, and schedulability testing, and the speedup bound metric.
 In Section~\ref{sec:edf} we present a speedup bound for EDF (which extends a result of~\cite{Phillips:2002}). We present and analyze a pseudopolynomial time EDF schedulability test in Section~\ref{sec:test}, and a  simple sufficient polynomial-time condition in Section~\ref{sec:simple-test}.  

\section{Model and definitions}\label{sec:defns}
%
In the \emph{sporadic DAG} model, a task $\tau_i$ ($i=1,\ldots, n$) is specified as a 3-tuple $(G_i, D_i, T_i)$, where $G_i$ is a vertex-weighted directed acyclic graph (DAG), and $D_i$ and $T_i$ are positive integers.
\begin{itemize}
\item The DAG $G_i$ is specified as $G_i=(V_i,E_i)$, where $V_i$ is a set of vertices and $E_i$ a set of directed edges between these vertices (it is required that these edges do not form any oriented cycle). Each $v\in V_i$ denotes a sequential operation (a ``job").  Each job $v\in V_i$ is characterized by a processing time $p_v \in \Nat$, also known as {\em worst-case execution time\/} or WCET.   The edges represent dependencies between the jobs: if $(v_1,v_2)\in E_i$ then job $v_1$ must complete execution before job $v_2$ can begin execution.   (We say a job becomes {\em eligible\/} to execute once all its predecessor jobs have completed execution.)
\item A {\em period\/} $T_i \in \Nat$.  A {\em release\/} or arrival of a {\em dag-job\/} of the task at time-instant $t$ means that all $|V_i|$  jobs $v\in V_i$ are released at time-instant $t$.  The period denotes the minimum amount of time that must elapse between the release of successive dag-jobs: if a dag-job is released at $t$, then the next dag-job cannot be released prior to time-instant $t+T_i$.
\item A {\em deadline\/} $D_i \in \Nat$. If a dag-job is released at time-instant $t$ then all $|V_i|$ jobs that were released at $t$ must complete execution by time-instant $t+D_i$. 
\end{itemize}

Throughout this paper we assume that the input consists of a \emph{task system} $\tsys=(\tau_1,\tau_2,\ldots,\tau_n)$, a collection of $n$ sporadic DAG tasks.  
If $D_i>T_i$, the task $\tau_i$ may release a dag-job prior to the completion of its previously-released dag-jobs. We do 
\emph{not} require that all jobs of a dag-job complete execution  before jobs of the next dag-job can start executing. 
\medskip

Some additional notation and terminology:
\begin{itemize}
\item A {\em chain\/} in the sporadic DAG task $\tau_i$ is a sequence of vertices $v_1, v_2,\ldots,v_k$ such that $(v_j,v_{j+1})$ is an edge in $G_i$, $1 \le j <k$.  The \emph{length} of this chain is defined to be the sum of the WCETs of all its vertices: $\sum_{j=1}^k p_{v_j}$.   

\item   We denote by $\lc(G_i)$ the length of the longest chain in $G_i$.  Note that $\lc(G_i)$ can be computed in time linear in the number of vertices and the number of edges in $G_i$, by first obtaining a topological order of the vertices of the graph and then running a straightforward dynamic program.
\item  We define $\vol(G_i)=\sum_{v \in V_i} p_v$.  That is, $\vol(G_i)$ is the total WCET  of each dag-job.  It is evident that $\vol(G_i)$ can be computed in time linear in the number of vertices in $G_i$.
\end{itemize}
\medskip

\paragraph{Feasibility and schedulability.}
Since the period parameter $T_i$ of the sporadic DAG task $\tau_i$ specifies the minimum, rather than exact, duration that must elapse between the release of successive dag-jobs, a task system may generate infinitely many different collections of dag-jobs. A task system $\tsys$ is said to be \emph{feasible} on $m$ speed-$s$ processors if a valid schedule exists on $m$ speed-$s$ processors for every collection of dag-jobs that may be generated by the task system. 
A task system is said to be \emph{EDF-schedulable} on $m$ speed-$s$ processors if EDF meets all deadlines when scheduling any collection of  dag-jobs that may be generated by the task system on $m$ speed-$s$ processors.

\paragraph{Speedup bounds.}  The problem of testing feasibility of a given DAG task system is highly intractable (\ccnp-hard in the strong sense~\cite{Ullman:1975}) even when $n=1$. It is therefore highly unlikely that we will be able to design efficient algorithms for solving the problem exactly, and our objective is therefore to come up with efficient algorithms that solve the problem {\em approximately\/}. 
In this paper, we present \emph{EDF-schedulability tests with speedup $s$} for some values $s$. These are tests which either guarantee that a system is EDF-schedulable on $m$ machines with speed $s$, or prove that the system is infeasible on $m$ machines of unit speed. 
The value $s$ is called the \emph{speedup bound} of the test and it is the metric we will use for quantifying the quality of the approximation. 

Sometimes one is unable to provide a test with a speedup bound. In that case a positive answer of the test may only be \emph{sufficient} to guarantee the EDF-schedulability of a DAG task system on $m$ unit speed processors, without any guarantee in the case where the test yields a negative answer. 

\awcom{I removed the old definition,  it is still in latex comment in the tex-file}

\vcom{I rephrased the text and added a remark about sufficient tests}

\section{Analysis of EDF for a collection of jobs}\label{sec:edf}
This section considers an arbitrary collection $J$ of dag-jobs, including, but not restricted to, any collection that may be generated by a given task system $\tsys$.  




We are given $m$ identical parallel processors, and the jobs in $J$ are revealed online over time. Each job $j$ is characterized by a release date $r_{j} \in \Nat_0$, an absolute deadline $d_{j} \in \Nat$, a processing time $p_{j} \in \Nat$, and a set of previous jobs $J_{j}$ which are exactly the jobs which have to be finished before $j$ becomes eligible (the \emph{predecessors} of $j$).
We assume that if $p_j$ is a predecessor of $p_k$ then $r_j=r_k$ and $d_j=d_k$. We call such a collection of jobs $J$ a \emph{normal} collection of jobs. Observe that every collection of jobs generated by a sporadic DAG task system is normal, since all jobs that constitute a certain dag-job have identical release date and deadline. 

At any time, the EDF scheduler processes the $m$ jobs with minimum deadline which are currently available (breaking ties arbitrarily).
A job $j$ is \emph{available} \emph{at time }$t$ if $t\ge r_{j}$ and all jobs in $J_{j}$ have been completed, while $j$ is not yet completed. We denote the length of a time interval $I$ by $|I|$. 

\begin{lemma}
\label{lem:EDF-speeds}
Consider a normal collection $J$ of jobs and let $\alpha \ge 1$. Then at least one of the following holds:  
\begin{enumerate}
\item all jobs in $J$ are completed within their deadline under EDF on $m$ processors of
speed $\alpha$, or 
\item $J$ is infeasible under arbitrarily many processors of unit
speed, or
\item there is an interval $I$ such that any feasible schedule for $J$ must finish
more than $(\alpha m-m+1) \cdot |I|$ units of work within $I$.
\end{enumerate}
\end{lemma}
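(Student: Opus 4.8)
The plan is to argue by contrapositive: assume that neither (1) nor (2) holds, i.e. EDF on $m$ speed-$\alpha$ processors misses some deadline, yet $J$ is feasible on arbitrarily many unit-speed processors; I will then exhibit the interval $I$ required by (3). Fix the EDF run on $m$ speed-$\alpha$ machines and let $d$ be the earliest deadline that is missed; discard all jobs with deadline strictly greater than $d$, since by the EDF priority rule their presence does not affect the scheduling decisions made for jobs of deadline $\le d$ before time $d$. So we may assume every job in $J$ has deadline at most $d$, and some job with deadline exactly $d$ is unfinished at time $d$.

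Next I would locate the interval. The standard move (as in Phillips et al.\ and the single-DAG analysis of \cite{Baruah:2012:a}) is to trace backward from the deadline miss through the ``busy'' structure of the EDF schedule. Call a time instant \emph{saturated} if all $m$ processors are occupied at that instant. Let $I = [t_0, d)$ where $t_0$ is the supremum of times $t \le d$ at which some processor is idle under EDF (set $t_0 = 0$ if there is none); then every instant of $I$ is saturated. I claim that at $t_0$ every job in $J$ has already been released: if some job $j$ with $r_j > t_0$ existed, then — because $J$ is normal, so $j$'s predecessors share its release date and hence also satisfy $r \ge r_j > t_0$ — no job on the chain ending at $j$ can execute before $r_j$, and one checks that this forces the deadline miss to be ``blamed'' on a later sub-interval, contradicting the maximality of the miss at $d$; more carefully, one restricts further to the jobs that are actual ancestors (via $J_j$) of the missed job and reruns the argument so that the chosen interval genuinely starts at or after the common release date of the offending dag-job. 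Thus all of the work $W$ that EDF executes during $I$ is work of jobs released by time $t_0$ with deadline at most $d$.

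Now I count. During $I$, EDF runs at full utilization on $m$ speed-$\alpha$ processors, so it completes exactly $\alpha m |I|$ units of work; since a deadline is still missed, there is at least $1$ unit of work with deadline $\le d$ not yet done at time $d$, and all of that outstanding work — together with the work EDF did in $I$ — belongs to jobs released by $t_0$ with deadline in $(t_0, d]$. Hence any feasible schedule must execute strictly more than $\alpha m |I| + 1 - (\text{work these jobs could have done before } t_0)$. The point of choosing $t_0$ to be the last idle instant is that just before $t_0$ EDF was \emph{not} saturated, which bounds how much of this job set's work could have been completed before $t_0$: at most $(\alpha-1)m|I|$-worth, roughly, comes from the speed advantage and the $m$-machine cap (this is exactly where the $\alpha m - m + 1$ shape comes from, as $\alpha m |I| + 1 - (\alpha - 1) m |I| = m|I| + 1 > (\alpha m - m + 1)|I|$ when $|I| \ge 1$; integrality of the quantities is used here). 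Rearranging yields that a feasible schedule must finish more than $(\alpha m - m + 1)|I|$ units of work within $I$, which is conclusion (3).

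The main obstacle is the second step: making rigorous the reduction to an interval $I$ that (a) is fully saturated under EDF, (b) starts no earlier than the release date of the dag-job whose deadline is missed, and (c) contains only work of jobs with deadline $\le d$ — all three simultaneously. With precedence constraints one cannot simply say ``all released work is available''; a chain can keep EDF busy on one processor while other processors sit idle, which is why normality (common release dates and deadlines within a dag-job) is essential and must be invoked carefully when peeling the schedule back to $t_0$. Getting the bookkeeping of ``work done before $t_0$ by the relevant job set'' tight enough to produce the clean constant $\alpha m - m + 1$, rather than something weaker, is the delicate part.
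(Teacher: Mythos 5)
There is a genuine gap. Your interval is chosen as the maximal saturated (busy) interval $[t_0,d)$ ending at the missed deadline, which is the classical argument for \emph{independent} jobs; but the lemma's conclusion (3) requires showing that \emph{any feasible schedule} must do a large amount of work \emph{inside} $I$, and nothing in your construction prevents a feasible unit-speed schedule from having done most of that work before $t_0$. Your attempt to cap this carry-over by ``at most $(\alpha-1)m|I|$-worth'' is unsupported: the work a feasible schedule can complete before $t_0$ is limited only by $m\cdot t_0$ and the released volume, neither of which has anything to do with $|I|$ or with EDF's speed advantage. The paper resolves exactly this difficulty by choosing the left endpoint differently: it introduces the idealized greedy schedule $\sinf$ (one unit-speed processor per job, each job run as early as possible), which dominates every unit-speed schedule job-by-job, and sets $t^{*}$ to be the \emph{latest time before which EDF has processed at least as much of every job as $\sinf$}. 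That domination at $t^{*}$ is precisely what converts ``work EDF does in $[t^{*},d_j]$'' into ``work every feasible schedule must do in $[t^{*},d_j]$.'' The resulting interval is not fully saturated, so the paper then splits on the total non-saturated time $Y$: if $\alpha Y\ge|I|$ an inductive catch-up argument against $\sinf$ shows EDF would in fact meet the deadline (contradiction), and if $\alpha Y<|I|$ a direct computation gives the $(\alpha m-m+1)|I|$ bound. You flag this obstacle honestly, but the patch you sketch (re-blaming the miss on a later sub-interval, restricting to ancestors) is not carried out and does not obviously recover the needed domination property.

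Even granting your bookkeeping, the final arithmetic does not prove the stated bound for general $\alpha\ge 1$: the inequality $m|I|+1>(\alpha m-m+1)|I|$ is equivalent to $\alpha<2-1/m+1/(m|I|)$, so it fails for larger $\alpha$, whereas the lemma asserts the $(\alpha m-m+1)|I|$ threshold for every $\alpha\ge 1$. In the paper's case analysis the factor $\alpha m-m+1$ arises from $\alpha mX+\alpha Y$ with $X=|I|-Y$ and $\alpha Y<|I|$, not from subtracting a presumed pre-$t_0$ contribution; that is the computation you would need to reproduce.
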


\begin{proof}
Suppose that both (1) and (2) do not hold, that is, under EDF on $m$ speed-$\alpha$ processors some job $j$ fails its deadline $d_{j}$, and $J$ is feasible if we are given a large enough number of processors. When given infinitely many (or, say, $|J|$) processors of unit speed, a simple greedy schedule is optimal: just allocate one processor to each job and schedule each job as early as possible. Denote by $\sinf$ such a greedy schedule; observe that $\sinf$ starts and ends processing jobs always at integral time points. Note, that $\sinf$ is globally optimal in the sense that at any point in time and for any job it has processed at least as much of that job as any algorithm on unit speed processors.

Without loss of generality, we can assume that there is no job $j'$ in the instance with $d_{j'}>d_{j}$ (otherwise, since $J$ is normal the removal of $j'$ does not affect EDF nor $\sinf$). Let $t^{*}$ denote the latest point in time before which EDF has processed at least as much of \emph{every }job as $\sinf$. Such a time exists, since $t=0$ satisfies this property. Also, it must hold that $t^{*}<d_{j}$. We claim that within $I:=[t^{*},d_{j}]$ EDF finishes more than $(\alpha m-m+1) \cdot |I|$ units of work, hence $\sinf$ finishes at least the same amount of work during $I$ (by construction of $I$) and hence \emph{every} feasible schedule has to finish more than $(\alpha m-m+1) \cdot |I|$ units of work during $I$. 

Denote by $X$ the total length of the intervals within $I$ where in the EDF schedule all $m$ processors are busy. Define $Y:=|I|-X$. We distinguish two cases. First assume that $\alpha\cdot Y\ge|I|$. Denote by $Y_{1},...,Y_{k}\subseteq I$ all subintervals of $I$ where not all processors are busy. We define $t'$ such that $\alpha\cdot|[t^{*},t']\cap\bigcup_{i}Y_{i}|=\left\lceil t^{*}\right\rceil -t^{*}$.
During all timesteps within $[t^{*},t']\cap\bigcup_{i}Y_{i}$ all jobs are available for EDF which are scheduled by $\sinf$ during $[t^{*},\left\lceil t^{*}\right\rceil ]$.
Since during all these timesteps EDF does not use all processors and runs the processors with speed $\alpha$, by time $t'$ it has processed at least as much of every job as $\sinf$ by time $\left\lceil t^{*}\right\rceil $.
Now define timesteps $t_{i}$, $i=0,...,d_{j}-\left\lceil t^{*}\right\rceil $ such that $\alpha\cdot|[t^{*},t_{i}]\cap\bigcup_{i}Y_{i}|=\left\lceil t^{*}\right\rceil -t^{*}+i$ for each $i$. We prove by induction that up to time $t_{i}$ EDF
has processed as much of every job as $\sinf$ by time $\left\lceil t^{*}\right\rceil +i$.
The case $i=0$ was proven above. Now suppose that the claim is true for some value $i$. Then at each timestep during $[t_{i},t_{i+1}]\cap\bigcup_{i}Y_{i}$ all jobs are available for EDF that $\sinf$ works on during $[\left\lceil t^{*}\right\rceil +i,\left\lceil t^{*}\right\rceil +i+1]$.
Since during all these timesteps EDF does not use all processors and runs the processors with speed $\alpha$, by time $t_{i+1}$ it has processed at least as much of every job as $\sinf$ by time $\left\lceil t^{*}\right\rceil +i+1$.
By induction the claim is true for $i^{*}=d_{j}-\left\lceil t^{*}\right\rceil $ and hence at time $\left\lceil t^{*}\right\rceil +i^{*}=d_{j}$ EDF has finished as much of every job as $\sinf$. This yields a contradiction since we assumed that $\sinf$ is feasible and EDF is not.

Now assume that $\alpha\cdot Y<|I|$. Hence, in the interval $I$ EDF finishes at least 
\begin{eqnarray*}
\alpha m\cdot X+\alpha\cdot Y & = & \alpha m\cdot(|I|-Y)+\alpha\cdot Y\\
 & = & \alpha m\cdot|I|-\alpha mY+\alpha\cdot Y\\
 & > & \alpha m\cdot|I|-m\cdot|I|+|I|\\
 & = & (\alpha m-m+1) \cdot |I|
\end{eqnarray*}
units of work, and by construction of $I$, any feasible solution has to finish during the interval $I$ all work that EDF finishes during $I$.
\qed
\end{proof}

The above lemma implies the following theorem if we choose $\alpha=2 - 1/m$.

\begin{theorem}
Any normal collection of jobs that is feasible on $m$ processors of unit speed is  EDF-schedulable on $m$ processors of speed $2-1/m$.
\end{theorem}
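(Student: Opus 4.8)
The plan is to derive the theorem directly from Lemma~\ref{lem:EDF-speeds} by setting $\alpha = 2 - 1/m$ and showing that, under this choice, the third alternative of the lemma cannot occur for a collection that is feasible on $m$ unit-speed processors. First I would instantiate the lemma with this value of $\alpha$ and compute the coefficient appearing in alternative~(3): we have $\alpha m - m + 1 = (2-1/m)m - m + 1 = 2m - 1 - m + 1 = m$. So alternative~(3) asserts the existence of an interval $I$ in which every feasible schedule must complete strictly more than $m \cdot |I|$ units of work.

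Next I would rule this out. On $m$ processors of unit speed, the total amount of work that can be processed during any interval $I$ is at most $m \cdot |I|$, simply because each of the $m$ processors runs at unit speed and can therefore contribute at most $|I|$ units of work over an interval of length $|I|$. Hence no schedule on $m$ unit-speed processors — in particular, no \emph{feasible} such schedule — can finish more than $m \cdot |I|$ units of work in $I$. But by hypothesis the collection $J$ is feasible on $m$ unit-speed processors, so a feasible schedule exists; this contradicts the requirement in alternative~(3) that such a schedule finish strictly more than $m \cdot |I|$ units. Therefore alternative~(3) is impossible.

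It remains to observe that alternative~(2) is also excluded: since $J$ is feasible on $m$ unit-speed processors it is a fortiori feasible given arbitrarily many unit-speed processors (just leave the extra processors idle, or use the greedy schedule $\sinf$ from the proof of the lemma), so $J$ is not infeasible under arbitrarily many unit-speed processors. With both alternatives~(2) and~(3) ruled out, Lemma~\ref{lem:EDF-speeds} forces alternative~(1): all jobs of $J$ meet their deadlines under EDF on $m$ processors of speed $\alpha = 2 - 1/m$. Since this holds for every normal collection $J$ that is feasible on $m$ unit-speed processors, the theorem follows.

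This argument is essentially a one-line corollary of the lemma, so there is no real obstacle; the only thing to be careful about is the arithmetic identity $\alpha m - m + 1 = m$ for $\alpha = 2 - 1/m$, and the elementary bound that $m$ unit-speed processors process at most $m\cdot|I|$ units of work in an interval of length $|I|$, which makes alternative~(3) vacuous precisely at this threshold value of $\alpha$.
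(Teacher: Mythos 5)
Your argument is correct and is essentially identical to the paper's own proof: both instantiate Lemma~\ref{lem:EDF-speeds} with $\alpha=2-1/m$, note that $\alpha m-m+1=m$, and rule out alternatives (2) and (3) using feasibility on $m$ unit-speed processors and the trivial $m\cdot|I|$ work bound. No issues.
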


\begin{proof}
Since we assumed the instance to be feasible, it is in particular feasible on a sufficiently high number of processors of unit speed. Also, the instance admits a valid schedule which finishes in any interval $I$ at most $m\cdot|I|$ units of work.
Note that if $\alpha=2-1/m$ then $(\alpha m-m+1) \cdot |I| = (2m-1-m+1) \cdot |I| = m |I|$.
Hence, Lemma~\ref{lem:EDF-speeds} implies that EDF finishes all jobs by their respective deadline.
\qed
\end{proof}

Since every collection of jobs generated by a sporadic DAG task system is normal, we obtain the following corollary. 
\begin{corollary}
Any task system that is feasible on $m$ processors of unit speed is EDF-schedulable on $m$ processors of speed $2-1/m$. 
\end{corollary}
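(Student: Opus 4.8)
The plan is to deduce the corollary directly from the preceding theorem by reducing schedulability of a task system to schedulability of each of the (normal) job collections it can generate. So the first step is to unpack the definitions: by hypothesis $\tsys$ is feasible on $m$ unit-speed processors, which by definition means that \emph{every} collection of dag-jobs $J$ that $\tsys$ may generate admits a valid schedule on $m$ unit-speed processors. Dually, to prove that $\tsys$ is EDF-schedulable on $m$ speed-$(2-1/m)$ processors, it suffices to show that for every such collection $J$, EDF meets all deadlines for $J$ on $m$ speed-$(2-1/m)$ processors.

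Next I would fix an arbitrary collection $J$ generated by $\tsys$ and check the two facts needed to invoke the theorem. First, $J$ is a \emph{normal} collection: all jobs constituting a single dag-job are released together and share the same (relative, hence absolute) deadline, so whenever a job is a predecessor of another they have identical release dates and deadlines — this is exactly the observation already recorded in Section~\ref{sec:edf}. Second, $J$ is feasible on $m$ unit-speed processors by the hypothesis, hence a fortiori feasible on arbitrarily many unit-speed processors (adding processors cannot destroy a valid schedule). Thus the hypothesis of the theorem is satisfied for $J$, and we conclude that $J$ is EDF-schedulable on $m$ processors of speed $2-1/m$.

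Since $J$ was an arbitrary collection generable by $\tsys$, and EDF is a fixed (deterministic) online policy whose behavior on $J$ does not depend on which other collections $\tsys$ might have generated, this establishes EDF-schedulability of $\tsys$ itself, completing the argument. There is essentially no obstacle here: the only point worth stating carefully is the quantifier bookkeeping — the theorem is a per-collection statement, while both "feasible" and "EDF-schedulable" for a task system are universally quantified over collections — so the corollary is obtained simply by applying the theorem uniformly over that quantifier.
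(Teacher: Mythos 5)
Your proof is correct and is exactly the paper's argument: the paper derives the corollary from the theorem with the single observation that every collection of jobs generated by a sporadic DAG task system is normal, which is the reduction you spell out. Your additional quantifier bookkeeping is sound but adds nothing beyond what the paper leaves implicit.
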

Notice that the bound is tight: examples are known (even without precedence constraints) of feasible collections of jobs that are not EDF-schedulable unless the  speedup is at least $2-1/m$ \cite{Phillips:2002}.

\section{A pseudopolynomial test}\label{sec:test}

In the following we present a pseudopolynomial test based  on a characterization of the work that a feasible instance requires.

Recall the definition of $\sinf$ from the proof of Lemma \ref{lem:EDF-speeds}. For a sequence of jobs $J$ and an interval $I$, we denote by $\work^J(I)$ 
the amount of work done by $\sinf$ during $I$ on the jobs in $J$ whose deadline is in $I$. 

\begin{definition}
Given a sporadic DAG task system $\tsys$, let $\text{gen}(\tsys)$ be the set of job sequences that may be generated by $\tsys$, and define 

$$\work_\tsys(t) := \sup_{J \in \text{gen}(\tsys)} \sup_{t_0 \ge 0} \work^J([t_0,t_0+t]).$$  
$$ \lambda_\tsys := \sup_{t \in \Nat} \frac{\work_\tsys(t)}{t}. $$
\end{definition}

The following lemma shows that a bound on $\work_\tsys(t)$ allows one to show that EDF is feasible with some speedup.

\begin{lemma}
\label{lem:work}
Let $\tsys$ be a sporadic DAG task system. Let $\eps \ge 0$ and suppose that $\work_\tsys(t)\le(1+\epsilon)mt$
for any $t \in \Nat$ and that $\tsys$ is feasible on a large enough number of unit-speed processors\footnote{Observe that \tsys\ is feasible on a large enough number of unit-speed processors if and only if $\lc(G_i) \le D_i$ for all $i=1,\ldots,n$.}. Then \tsys\ is EDF-schedulable on $m$ processors of speed $2-1/m+\epsilon$.
\end{lemma}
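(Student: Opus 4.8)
The plan is to deduce the claim from Lemma~\ref{lem:EDF-speeds} instantiated with the processor speed $\alpha := 2 - 1/m + \epsilon$. With this choice, $\alpha m - m + 1 = (2m - 1 + \epsilon m) - m + 1 = (1+\epsilon)m$, so the third alternative of Lemma~\ref{lem:EDF-speeds} asserts the existence of an interval $I$ that forces every feasible schedule to complete strictly more than $(1+\epsilon)m\,|I|$ units of work inside $I$. Fix an arbitrary job sequence $J$ that $\tsys$ may generate. The second alternative of Lemma~\ref{lem:EDF-speeds} is excluded by hypothesis: $\tsys$ is feasible on a large enough number of unit-speed processors (equivalently, $\lc(G_i) \le D_i$ for all $i$), hence so is $J$. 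It therefore suffices to show that the third alternative cannot hold for $J$; the first alternative then applies and yields exactly the conclusion.

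To rule out the third alternative I would reuse the constructs inside the proof of Lemma~\ref{lem:EDF-speeds}: there, if EDF on $m$ speed-$\alpha$ processors lets a job $j$ miss its deadline, then (after the harmless reduction deleting all jobs with deadline exceeding $d_j$, which changes neither EDF nor $\sinf$ on the remaining jobs) the critical interval is $I = [t^*, d_j]$, where $d_j \in \Nat$ is now the largest deadline in $J$ and $t^*$ is the last instant at which EDF has processed at least as much of every job as $\sinf$. The case $\alpha\cdot Y \ge |I|$ in that proof gives an outright contradiction, so in the failing scenario we are in the case $\alpha\cdot Y < |I|$, where EDF completes strictly more than $(1+\epsilon)m\,|I|$ units of work during $I$. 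Now I would argue that all of this work sits on jobs whose deadline lies in $I$: deadlines never exceed $d_j = \sup I$ by the reduction; and a job $j'$ with $d_{j'} < t^*$ has been finished by $\sinf$ before $t^*$ (since $\sinf$ meets every deadline), hence also by EDF, because at time $t^*$ EDF is at least as far along on $j'$ as $\sinf$. Moreover, for each job $j'$ with $d_{j'} \in I$, the work EDF performs on $j'$ during $I$ is at most the work $\sinf$ still owes on $j'$ at time $t^*$ (again since EDF is ahead of $\sinf$ at $t^*$), and $\sinf$ discharges exactly that remaining amount during $I$ (as it finishes $j'$ by $d_{j'} \le d_j$). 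Summing over these $j'$ shows the total work EDF does during $I$ is at most $\work^J(I)$, so $\work^J(I) > (1+\epsilon)m\,|I|$.

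It remains to reconcile this with the hypothesis $\work_\tsys(t) \le (1+\epsilon)mt$, which constrains only integer lengths $t$, whereas $|I| = d_j - t^*$ need not be integral. For this I would observe that, with $b := d_j$ fixed, the map $a \mapsto \work^J([a,b])$ is continuous and piecewise linear with all breakpoints at integers: $\sinf$ starts and finishes every job at integer instants, so each job's contribution to $\work^J([a,b])$ is $0$ unless $a \le d_{j'}$, in which case it is a continuous piecewise-linear function of $a$ with integer breakpoints that vanishes at $a = d_{j'}$ (because $\sinf$ has then already completed $j'$); hence the whole sum stays continuous. Since $\work^J([a, d_j]) \le \work_\tsys(d_j - a) \le (1+\epsilon)m(d_j - a)$ at every integer $a$, and the upper bound is affine in $a$, linear interpolation over the unit integer interval $[\lfloor t^*\rfloor, \lceil t^*\rceil]$ gives $\work^J([t^*, d_j]) \le (1+\epsilon)m(d_j - t^*) = (1+\epsilon)m\,|I|$, contradicting the previous paragraph. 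Thus the third alternative is impossible and the proof concludes.

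The step I expect to be the main obstacle is the middle one: extracting from the internals of the proof of Lemma~\ref{lem:EDF-speeds} that the ``work any feasible schedule must do in $I$'' is precisely bounded by $\work^J(I)$ as defined -- work of $\sinf$ restricted to jobs whose deadline lies in $I$ -- which rests on the catch-up relation between EDF and $\sinf$ at time $t^*$ and on the deadline-feasibility of $\sinf$. The integrality reconciliation in the last step is routine once the continuity and integer-breakpoint structure of $a \mapsto \work^J([a,\cdot])$ is noted.
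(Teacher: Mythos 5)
Your proposal is correct and follows essentially the same route as the paper: instantiate Lemma~\ref{lem:EDF-speeds} with $\alpha = 2-1/m+\epsilon$, rule out alternative (2) by the feasibility hypothesis, and show that alternative (3) would contradict $\work_\tsys(t)\le(1+\epsilon)mt$. The paper's own proof is a three-line version of this; your additional care in verifying that the work forced into $I$ is genuinely captured by $\work^J(I)$ (via the catch-up relation with $\sinf$ at $t^*$) and in handling the possible non-integrality of $|I|$ by piecewise-linear interpolation fills in details the paper leaves implicit, and both checks go through.
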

\begin{proof}
Suppose that EDF fails on some job sequence $J \in \mathrm{gen}(\tsys)$ when running at speed $2-1/m+\epsilon$.
Then by Lemma~\ref{lem:EDF-speeds} there is an interval $I$ in which any feasible schedule must finish more than $(\alpha m-m+1) \cdot |I| = (2m-1+\epsilon m-m+1) |I| = (1+\epsilon)m|I|$ units of work. This contradicts that $\work_\tsys(|I|)\le(1+\epsilon) m |I|$. 
\qed
\end{proof}

Therefore, in order to approximately test the feasibility of $\tsys$ it suffices to estimate $\lambda_\tsys$. 

\begin{lemma}
Let $\eps \ge 0 $ and $\hat{\lambda}_\tsys$ be such that $\lambda_\tsys/(1+\eps) \le \hat{\lambda}_\tsys \le  \lambda_\tsys$. Assume that $\tsys$ is feasible on a large enough number of unit-speed processors. Then
\begin{enumerate}
\item if $\hat{\lambda}_\tsys > m $, \tsys\ is infeasible on $m$ unit speed processors; 
\item if $\hat{\lambda}_\tsys \le m $, \tsys\ is EDF-schedulable on $m$ speed-$(2-1/m+\eps)$ processors. 
\end{enumerate}
\end{lemma}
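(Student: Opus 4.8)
The plan is to derive this lemma as an almost immediate consequence of Lemma~\ref{lem:work} together with the definition of $\lambda_\tsys$ and the hypothesis relating $\hat{\lambda}_\tsys$ to $\lambda_\tsys$. The two cases are handled separately, and in both I would start from the sandwich inequality $\lambda_\tsys/(1+\eps) \le \hat{\lambda}_\tsys \le \lambda_\tsys$.

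For part~(1), suppose $\hat{\lambda}_\tsys > m$. Then $\lambda_\tsys \ge \hat{\lambda}_\tsys > m$, so by definition of $\lambda_\tsys$ as a supremum there is some $t \in \Nat$ with $\work_\tsys(t)/t > m$, i.e., some job sequence $J \in \text{gen}(\tsys)$ and interval $I$ of length $t$ with $\work^J(I) > m|I|$. Since $\work^J(I)$ counts work that $\sinf$ — a globally optimal unit-speed schedule — performs during $I$ on jobs whose deadline lies in $I$, \emph{every} feasible unit-speed schedule must also complete at least $\work^J(I)$ units of work during $I$ (the jobs in question must be finished by the end of $I$, and no unit-speed schedule can have done more of them by the start of $I$ than $\sinf$, by the global optimality noted in the proof of Lemma~\ref{lem:EDF-speeds}). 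But $m$ unit-speed processors can complete at most $m|I|$ units of work during $I$, a contradiction; hence $\tsys$ is infeasible on $m$ unit-speed processors.

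For part~(2), suppose $\hat{\lambda}_\tsys \le m$. Then $\lambda_\tsys \le (1+\eps)\hat{\lambda}_\tsys \le (1+\eps)m$, so by definition of $\lambda_\tsys$ we have $\work_\tsys(t)/t \le (1+\eps)m$, i.e., $\work_\tsys(t) \le (1+\eps)mt$, for all $t \in \Nat$. Combined with the standing assumption that $\tsys$ is feasible on a large enough number of unit-speed processors, Lemma~\ref{lem:work} (applied with this $\eps$) yields that $\tsys$ is EDF-schedulable on $m$ processors of speed $2-1/m+\eps$.

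The only subtlety — and the one step worth stating carefully rather than a genuine obstacle — is the argument in part~(1) that a lower bound on the work done by $\sinf$ on deadline-$I$ jobs transfers to a lower bound on every feasible schedule; this reuses precisely the observation from the proof of Lemma~\ref{lem:EDF-speeds} that $\sinf$ dominates every unit-speed schedule jobwise at every time instant, so any work $\sinf$ still has left to do inside $I$ on jobs due by the end of $I$ must also be done inside $I$ by any feasible schedule. Everything else is bookkeeping with the supremum defining $\lambda_\tsys$. For this reason I would keep the proof to a few lines, invoking Lemma~\ref{lem:work} directly for part~(2) and spelling out the feasibility obstruction for part~(1).
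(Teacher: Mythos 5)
Your proof is correct and follows essentially the same route as the paper: part (1) unwinds the supremum defining $\lambda_\tsys$ to exhibit an interval where required work exceeds the capacity $m|I|$, and part (2) is a direct application of Lemma~\ref{lem:work} after noting $\lambda_\tsys \le (1+\eps)m$. The extra care you take in part (1) to justify that the lower bound on $\sinf$'s work transfers to every feasible schedule is a detail the paper leaves implicit, and your justification via the jobwise dominance of $\sinf$ is the right one.
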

\begin{proof}
In case (1), $\lambda_\tsys \ge \hat{\lambda}_\tsys > m$, therefore there is a job collection $J \in \mathrm{gen}(\tsys)$ and an interval $I$ such that $\work^J(I) > m|I|$, hence $\tsys$ is not feasible on $m$ unit speed machines. 

In case (2), $\lambda_\tsys \le (1+\eps) \hat{\lambda}_\tsys \le (1+\eps)m$, therefore Lemma \ref{lem:work} applies.   
\qed
\end{proof}

\vcom{I fixed the statement of the corollary}

\begin{corollary}
\label{cor:lambda}
Let $\eps \ge 0$. 
A $(1+\eps)$-approximation algorithm for $\lambda_\tsys$ yields an EDF-schedulability test for $\tsys$ with speedup $2-1/m+\eps$. 
\end{corollary}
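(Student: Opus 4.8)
The plan is to combine the preceding lemma with an easy polynomial-time feasibility check performed on each DAG in isolation. Recall that, by the footnote to Lemma~\ref{lem:work}, $\tsys$ is feasible on a large enough number of unit-speed processors if and only if $\lc(G_i) \le D_i$ for every $i=1,\ldots,n$, a condition that can be verified in time linear in the size of each $G_i$ (topological sort plus a straightforward dynamic program). So the test first checks this condition; if it fails for some $i$, then a single chain of length exceeding $D_i$ already cannot be completed within $D_i$ on any number of processors, let alone on $m$ unit-speed processors, so the test may correctly report ``infeasible on $m$ unit-speed processors''.

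If the check passes, the test invokes the given $(1+\eps)$-approximation algorithm for $\lambda_\tsys$, obtaining a value $\hat{\lambda}_\tsys$ with $\lambda_\tsys/(1+\eps) \le \hat{\lambda}_\tsys \le \lambda_\tsys$, which is exactly the output guarantee of a $(1+\eps)$-approximation algorithm for $\lambda_\tsys$. Since we have already ensured that $\tsys$ is feasible on a large enough number of unit-speed processors, the hypotheses of the preceding lemma are met, and we distinguish its two cases. If $\hat{\lambda}_\tsys > m$, part~(1) of that lemma tells us $\tsys$ is infeasible on $m$ unit-speed processors, and the test reports this. If $\hat{\lambda}_\tsys \le m$, part~(2) tells us $\tsys$ is EDF-schedulable on $m$ processors of speed $2-1/m+\eps$, and the test reports this.

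In either branch the output is one of the two verdicts required of an EDF-schedulability test with speedup $2-1/m+\eps$, namely a guarantee of EDF-schedulability on $m$ speed-$(2-1/m+\eps)$ processors or a proof of infeasibility on $m$ unit-speed processors; and the running time is that of the approximation algorithm plus the linear-time chain-length checks. There is essentially no hard step: the only point requiring a little care is the boundary case in which $\tsys$ is not feasible on an unbounded number of processors, which is precisely why the separate $\lc(G_i) \le D_i$ test must be run before the preceding lemma can be invoked.
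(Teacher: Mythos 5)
Your proof is correct and follows essentially the same route as the paper, which treats the corollary as an immediate consequence of the preceding lemma's two-case dichotomy on $\hat{\lambda}_\tsys$. Your explicit preliminary check that $\lc(G_i) \le D_i$ for all $i$ is a welcome bit of extra care in discharging that lemma's hypothesis of feasibility on sufficiently many unit-speed processors, which the paper leaves implicit.
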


\paragraph{Approximation of $\lambda_\tsys$.}

We now show how to efficiently estimate $\lambda_\tsys$. Since the tasks $\tau_1,\ldots,\tau_n$ of \tsys\ are independent of each other, we can equivalently write
$$ \lambda_\tsys = \sup_{t \in \Nat} \frac{\sum_{i=1}^n \work_i(t)}{t} $$
where $ \work_i(t)$ is the maximum amount of work that may be done by $\sinf$ on jobs of task $\tau_i$ in an interval of length $t$. 
This maximum is achieved when the deadline of some job of $\tau_i$ coincides with the rightmost endpoint of the interval, and the other jobs of $\tau_i$ are released as closely as possible. That is, if the interval is (without loss of generality) $[t_0,t_0+t]$, then there is
\begin{itemize}
\item one job with release date $t_0+t-D_i$ and deadline $t_0+t$, 
\item one job with release date $t_0+t-D_i-T_i$ and deadline $t_0+t-T_i$, 
\item one job with release date $t_0+t-D_i-2 T_i$ and deadline $t_0+t-2T_i$, 
\item \dots
\item in general, one job with release date $t_0+t-D_i-k T_i$, up to a $k$ such that $t_0+t-(k+1) T_i \le t_0$ (more jobs would not contribute to the amount of work done by $\sinf$ during $[t_0,t_0+t]$). 
\end{itemize}

As a consequence, $\work_i(t)$ is piecewise linear as a function of $t$, with a number of pieces that is proportional to $|V_i| \cdot t/T_i$, as each dag-job is responsible for at most $|V_i|$ pieces.

\begin{lemma}
\label{lem:work:bounds}
For any task $\tau_i=(G_i,D_i,T_i)$, 
\begin{align} 
\work_i(t) & \ge \max \left( \floor{\frac{t+T_i-D_i}{T_i}}, 0 \right) \cdot \vol(G_i), \label{eq:work:lb} \\ 
\work_i(t) & \le \ceil{\frac{t}{T_i}} \cdot \vol(G_i). \label{eq:work:ub}
\end{align}
\end{lemma}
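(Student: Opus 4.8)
The plan is to establish the two inequalities separately, in each case using the explicit worst-case shape of a job sequence for $\tau_i$ that is described in the paragraph preceding the lemma, together with the fact (recorded in the footnote to Lemma~\ref{lem:work}) that a task system feasible on enough unit-speed processors has $\lc(G_i)\le D_i$. The only thing I need from the latter is that in $\sinf$ a dag-job of $\tau_i$ released at an instant $r$ is executed entirely within $[r,r+\lc(G_i)]\subseteq[r,r+D_i]$, so that it is finished by its deadline; I will use this for both bounds.

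For the lower bound \eqref{eq:work:lb} I may assume $t\ge D_i$, since otherwise the right-hand side is $\max(\cdot,0)=0\le\work_i(t)$. Set $K:=\floor{(t-D_i)/T_i}\ge 0$, pick an integer $t_0$ large enough that $t_0+t-D_i-KT_i\ge 0$, and consider the job sequence in which $\tau_i$ releases dag-jobs exactly at the instants $t_0+t-D_i-kT_i$ for $k=0,1,\dots,K$ (consecutive releases are exactly $T_i$ apart, hence this is admissible), taking the interval $I=[t_0,t_0+t]$. For each such $k$, the corresponding dag-job has release date $\ge t_0$ and deadline $t_0+t-kT_i\in I$, and its whole execution window $[r,r+\lc(G_i)]\subseteq[r,r+D_i]$ is contained in $I$; hence it contributes its full volume $\vol(G_i)$ to $\work^J(I)$. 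Summing over the $K+1$ dag-jobs and using $K+1=\floor{(t-D_i)/T_i}+1=\floor{(t+T_i-D_i)/T_i}$ gives \eqref{eq:work:lb}.

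For the upper bound \eqref{eq:work:ub} I fix an arbitrary $J\in\mathrm{gen}(\tsys)$ and an arbitrary interval $I=[t_0,t_0+t]$, and bound the part of $\work^J(I)$ coming from $\tau_i$. Such work is done on jobs of $\tau_i$ whose deadline lies in $I$, and these jobs are grouped into dag-jobs; a dag-job of $\tau_i$ released at $r$ (deadline $r+D_i$) is run by $\sinf$ inside $[r,r+D_i]$, so it does positive work during $I$ only if $r+D_i>t_0$, i.e.\ only if its deadline lies in the half-open interval $(t_0,t_0+t]$. Distinct dag-jobs of $\tau_i$ have deadlines at least $T_i$ apart, so at most $\ceil{t/T_i}$ of them have a deadline in $(t_0,t_0+t]$; since each dag-job contributes at most $\vol(G_i)$, we obtain $\work^J(I)\le\ceil{t/T_i}\cdot\vol(G_i)$ for the $\tau_i$-part, and taking the supremum over $J$ and $t_0$ yields \eqref{eq:work:ub}.

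I expect no serious obstacle; the one point to be careful about is why the upper bound carries $\ceil{t/T_i}$ rather than the weaker $\floor{t/T_i}+1$: this is exactly what $\lc(G_i)\le D_i$ saves, because then a dag-job whose deadline equals the left endpoint $t_0$ has already been completed before $t_0$ and contributes nothing, so only the deadlines in the half-open $(t_0,t_0+t]$ count. The remaining verification is routine arithmetic with floors and ceilings.
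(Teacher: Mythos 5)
Your proof is correct and follows essentially the same route as the paper's (much terser) argument: exhibit the densest admissible release pattern for the lower bound, and count dag-job deadlines spaced at least $T_i$ apart for the upper bound. You are in fact somewhat more careful than the paper, making explicit both the role of $\lc(G_i)\le D_i$ and the half-open-interval argument that justifies $\ceil{t/T_i}$ rather than $\floor{t/T_i}+1$.
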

\begin{proof}
\eqref{eq:work:lb}: there can be as many as $\floor{(t+T_i-D_i)/T_i}$ releases of $\tau_i$-dag-jobs in an interval of length $t$ whose release date and deadline fall within the interval; each of them contributes $\vol(G_i)$ to the work function.  

\eqref{eq:work:ub}: there cannot be more than $\ceil{t/T_i}$ releases of $\tau_i$-dag-jobs in an interval of length $t$ whose deadline falls within the interval. These dag-jobs are the only ones that contribute a positive amount of work. 
\qed
\end{proof}

Since the number of pieces of $\work_i(t)$ grows with $t$, it is not clear how to handle this function efficiently. Therefore, we approximate $\work_i(t)$ by a function $\awork_i(t)$ defined as follows: 
\begin{align*}
\awork_i(t) := \begin{cases} 
\work_i(t) & \text{ if } t \le T_i/\eps + (1+1/\eps) D_i \\
\frac{t-D_i}{T_i}\, \vol(G_i) & \text{ if } t > T_i/\eps + (1+1/\eps) D_i.  
\end{cases}
\end{align*}

\begin{lemma}
The piecewise linear function $\awork_i$ has $O(\frac{1}{\eps} \cdot |V_i| \cdot (1+\frac{D_i}{T_i}))$ many pieces. 
\end{lemma}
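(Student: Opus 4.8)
The plan is to split $\awork_i$ according to the two cases in its definition and count linear pieces separately. Write $t^{*} := T_i/\eps + (1+1/\eps)D_i$ for the threshold. On $(t^{*},\infty)$ the function $\awork_i$ is, by definition, the single affine function $t \mapsto \frac{t-D_i}{T_i}\vol(G_i)$, contributing one piece there plus at most one extra breakpoint at $t=t^{*}$ (where $\awork_i$ need not meet $\work_i$ smoothly). Hence it suffices to bound the number of pieces of $\work_i$ restricted to the bounded interval $[0,t^{*}]$.

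For that I would reuse the structural description of $\work_i$ established just before Lemma~\ref{lem:work:bounds}: over an interval of length $t$, only the $\tau_i$-dag-jobs whose deadline lies inside the interval contribute to $\sinf$'s work, and by \eqref{eq:work:ub} there are at most $\ceil{t/T_i}$ of them. Each such dag-job contributes at most $|V_i|$ linear pieces to $\work_i$ as a function of $t$, because the portion of that dag-job's $\sinf$-work captured by the interval traces out $\sinf$'s cumulative-work profile on a single dag-job, which is piecewise linear with at most $|V_i|$ segments (its slope, the number of processors busy on that dag-job, changes only when one of its $|V_i|$ jobs starts or finishes). Consequently $\work_i$ restricted to $[0,t^{*}]$ has at most $|V_i|\cdot\ceil{t^{*}/T_i}$ pieces, and $\awork_i$ has at most $|V_i|\cdot\ceil{t^{*}/T_i} + O(1)$ pieces in total.

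It remains to estimate $\ceil{t^{*}/T_i}$. A direct computation gives $t^{*}/T_i = 1/\eps + (1+1/\eps)D_i/T_i$; in the relevant regime $\eps \le 1$ this yields $\ceil{t^{*}/T_i} \le 1 + 1/\eps + (1+1/\eps)D_i/T_i \le \tfrac{2}{\eps}\bigl(1+D_i/T_i\bigr)$. Plugging this into the bound above gives that the number of pieces of $\awork_i$ is $O\bigl(\tfrac1\eps\,|V_i|\,(1+D_i/T_i)\bigr)$, as claimed. (If one does not wish to assume $\eps \le 1$, the count should be read as $O(|V_i|(1+\tfrac1\eps(1+D_i/T_i)))$, which the big-$O$ in the statement absorbs under the usual convention.)

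The two arithmetic estimates are routine; the one step I would write out with some care is the claim that a single dag-job accounts for at most $|V_i|$ pieces of $\work_i$ — that is, that $\sinf$'s cumulative-work function on one dag-job is piecewise linear with at most $|V_i|$ breakpoints, and that sliding the left endpoint of the measuring interval across this profile creates no new breakpoints beyond those (together with the trivial ones at which the dag-job enters, resp.\ is fully contained in, the interval). Everything else follows by substitution into the bounds already available in the excerpt.
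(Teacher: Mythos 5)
Your proof is correct and follows the same route the paper intends: the paper's own proof is simply ``immediate from the definition of $\awork_i$ and the properties of $\work_i$,'' where the relevant property is exactly the one you spell out (each dag-job contributes at most $|V_i|$ pieces, so $\work_i$ has $O(|V_i|\cdot t/T_i)$ pieces on $[0,t]$), applied at the threshold $t^{*}=T_i/\eps+(1+1/\eps)D_i$. You have merely made explicit the per-dag-job piece count and the arithmetic that the paper leaves implicit.
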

\begin{proof}
Immediate from the definition of $\awork_i$ and the properties of $\work_i$. 
\qed
\end{proof}

\begin{corollary}
\label{cor:pieces}
Let $\awork(t) := \sum_{i=1}^n \awork_i(t)$. 
The piecewise linear function $\awork$ has $O(\frac{1}{\eps} \cdot \sum_{i=1}^n |V_i| \cdot \max_{i=1}^n (1+\frac{D_i}{T_i}))$ many pieces. 
\end{corollary}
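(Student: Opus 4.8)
The plan is to derive this immediately from the preceding lemma using a standard fact about sums of univariate piecewise linear functions. First I would note that if $f_1,\dots,f_n$ are real-valued piecewise linear functions of one real variable, then $f := \sum_{i=1}^n f_i$ is again piecewise linear, and every breakpoint of $f$ must be a breakpoint of at least one of the $f_i$: on any maximal open interval that avoids the (finitely many) breakpoints of every $f_i$, each $f_i$ is affine, hence so is $f$. Consequently the number of linear pieces of $f$ is at most $\sum_{i=1}^n (\text{number of pieces of }f_i)$; one could sharpen this to $1 + \sum_{i=1}^n (\text{pieces of }f_i - 1)$, but the crude additive bound is all we need for an $O(\cdot)$ estimate.

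Applying this with $f_i = \awork_i$ and invoking the previous lemma, which gives $O(\frac{1}{\eps}\cdot|V_i|\cdot(1+\frac{D_i}{T_i}))$ pieces for each $\awork_i$, we conclude that $\awork = \sum_{i=1}^n \awork_i$ has $O\bigl(\frac{1}{\eps}\cdot\sum_{i=1}^n |V_i|\cdot(1+\frac{D_i}{T_i})\bigr)$ pieces. To put this in the form stated in the corollary I would factor out the $(1+D_i/T_i)$ term by bounding it with its maximum: $\sum_{i=1}^n |V_i|\cdot(1+\frac{D_i}{T_i}) \le \bigl(\max_{i=1}^n (1+\frac{D_i}{T_i})\bigr)\cdot\sum_{i=1}^n |V_i|$, which yields the claimed bound $O\bigl(\frac{1}{\eps}\cdot\sum_{i=1}^n |V_i|\cdot\max_{i=1}^n(1+\frac{D_i}{T_i})\bigr)$.

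There is essentially no obstacle here; the argument is pure bookkeeping. The only point worth a moment's care is the claim that the breakpoints of the sum lie among the breakpoints of the summands — and, relatedly, that the tail (affine) parts of the $\awork_i$, all of which are in force for $t$ exceeding $\max_i\bigl(T_i/\eps + (1+1/\eps)D_i\bigr)$, introduce no further breakpoints. But since we only want an asymptotic upper bound, even the naive count that ignores all such cancellations already suffices, so I would not belabor this.
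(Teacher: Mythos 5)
Your argument is correct and matches what the paper intends: the corollary is left unproved there precisely because it follows by summing the per-task piece counts from the preceding lemma and bounding each factor $(1+D_i/T_i)$ by the maximum, exactly as you do. No gap; your extra care about breakpoints of a sum of piecewise linear functions is sound but, as you note, not needed for the asymptotic bound.
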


\begin{lemma}
For all $i=1,\ldots,n$ and all $t \in \Nat$, 
\begin{equation*} 
\frac{1}{1+\eps} \work_i(t) \le \awork_i(t) \le \work_i(t). 
\end{equation*}
\end{lemma}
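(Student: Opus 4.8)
The plan is a short case analysis driven by the threshold $\theta_i := T_i/\eps + (1+1/\eps)D_i$ appearing in the definition of $\awork_i$ (with the convention $\theta_i = +\infty$ when $\eps = 0$, which makes the statement trivial, since then $\awork_i \equiv \work_i$). For $t \le \theta_i$ we have $\awork_i(t) = \work_i(t)$ by definition, so both displayed inequalities hold with equality and there is nothing to prove. The whole argument therefore reduces to the regime $t > \theta_i$, in which $\awork_i(t) = \frac{t - D_i}{T_i}\vol(G_i)$; there I would invoke the two estimates of Lemma~\ref{lem:work:bounds}. Note first that $t > \theta_i$ implies $t > D_i$ (since $T_i/\eps \ge 0$ and $(1+1/\eps)D_i \ge D_i$), a fact I will use twice.

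For the upper bound $\awork_i(t) \le \work_i(t)$: writing $\frac{t+T_i-D_i}{T_i} = \frac{t-D_i}{T_i} + 1$ and using that $1$ is an integer gives $\floor{\frac{t+T_i-D_i}{T_i}} = \floor{\frac{t-D_i}{T_i}} + 1 \ge \frac{t-D_i}{T_i} \ge 0$, so \eqref{eq:work:lb} yields $\work_i(t) \ge \frac{t-D_i}{T_i}\vol(G_i) = \awork_i(t)$. For the lower bound $\frac{1}{1+\eps}\work_i(t) \le \awork_i(t)$, equivalently $\work_i(t) \le (1+\eps)\frac{t-D_i}{T_i}\vol(G_i)$: from \eqref{eq:work:ub} together with $\ceil{t/T_i} \le \frac{t}{T_i} + 1 = \frac{t+T_i}{T_i}$ it suffices to check $t + T_i \le (1+\eps)(t - D_i)$; rearranging, this is exactly $t \ge T_i/\eps + (1+1/\eps)D_i = \theta_i$, which holds because $t > \theta_i$.

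The only thing to watch is bookkeeping: one must verify that the threshold $\theta_i$ in the definition of $\awork_i$ has been tuned precisely so that the inequality $t + T_i \le (1+\eps)(t-D_i)$ is equivalent to $t \ge \theta_i$, and that $t > \theta_i$ genuinely forces $t > D_i$ so the floor identity and the nonnegativity used in the upper bound are legitimate. Beyond these checks there is no real obstacle — the statement is essentially a sanity check that the two crude bounds of Lemma~\ref{lem:work:bounds} pinch $\work_i(t)$ to within a factor $1+\eps$ once $t$ is large compared to $T_i$ and $D_i$.
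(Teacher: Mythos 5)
Your proof is correct and takes essentially the same route as the paper's: both reduce to the regime $t > T_i/\eps + (1+1/\eps)D_i$ and pinch $\awork_i(t)$ between the two bounds of Lemma~\ref{lem:work:bounds}, your rearrangement of $t+T_i \le (1+\eps)(t-D_i)$ into $t \ge \theta_i$ being the same computation the paper carries out by bounding the ratio $(t+T_i)/(t-D_i)$ at the threshold.
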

\begin{proof}
First observe that $\work_i(t) \ge \awork_i(t)$, since for all $t > T_i/\eps + (1+1/\eps)D_i$, by \eqref{eq:work:lb}, 
$$ 
\frac{\work_i(t)}{\vol(G_i)} \ge 
\floor{\frac{t+T_i-D_i}{T_i}} \ge \frac{t+T_i-D_i}{T_i} -1 = \frac{t-D_i}{T_i} = \frac{\awork_i(t)}{\vol(G_i)}. $$
Moreover, using \eqref{eq:work:ub}, 
\begin{align*} 
\frac{\work_i(t)}{\awork_i(t)} \le \frac{\ceil{t/T_i}}{\frac{t-D_i}{T_i}} \le \frac{t/T_i + 1}{t/T_i - D_i/T_i} = \frac{t+T_i}{t-D_i} \le \frac{(D_i+T_i)/\eps + D_i+T_i}{(D_i+T_i)/\eps+D_i-D_i} = 1+\eps.
\end{align*}
\qed
\end{proof}

\begin{corollary}
\label{cor:apx}
For all $t \in \Nat$, 
$\frac{1}{1+\eps} \work(t) \le \awork(t) \le \work(t). $
\end{corollary}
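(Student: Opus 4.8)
The plan is to obtain the statement by summing the per-task estimate over all tasks. Recall that in the paragraph ``Approximation of $\lambda_\tsys$'' it was observed that $\work(t) = \work_\tsys(t) = \sum_{i=1}^n \work_i(t)$, and that $\awork(t)$ is defined as $\sum_{i=1}^n \awork_i(t)$. The preceding lemma established that, for every index $i$ and every $t \in \Nat$, $\frac{1}{1+\eps}\work_i(t) \le \awork_i(t) \le \work_i(t)$.

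First I would fix an arbitrary $t \in \Nat$ and sum the upper bound $\awork_i(t) \le \work_i(t)$ over $i = 1,\ldots,n$; since finite summation preserves inequalities, this yields $\awork(t) = \sum_i \awork_i(t) \le \sum_i \work_i(t) = \work(t)$. Next I would sum the lower bound $\frac{1}{1+\eps}\work_i(t) \le \awork_i(t)$ over the same range, obtaining $\frac{1}{1+\eps}\work(t) = \frac{1}{1+\eps}\sum_i \work_i(t) = \sum_i \frac{1}{1+\eps}\work_i(t) \le \sum_i \awork_i(t) = \awork(t)$. Combining the two displays gives exactly $\frac{1}{1+\eps}\work(t) \le \awork(t) \le \work(t)$, as required.

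There is essentially no obstacle here: the only point worth a moment's care is that the per-task inequality holds pointwise in $t$ (not merely after taking a supremum over $t$), so that the common value of $t$ may be carried through the summation without any extra argument; this is guaranteed by the statement of the preceding lemma. Everything else is linearity of the finite sum and monotonicity of $\le$ under addition, so the proof is a one-line consequence.

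\begin{proof}
Fix $t \in \Nat$. Summing the inequality of the previous lemma over $i=1,\ldots,n$ and using $\work(t) = \sum_{i=1}^n \work_i(t)$ together with $\awork(t) = \sum_{i=1}^n \awork_i(t)$ gives $\frac{1}{1+\eps}\work(t) \le \awork(t) \le \work(t)$.
\qed
\end{proof}
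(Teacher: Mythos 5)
Your proposal is correct and is exactly the argument the paper intends (the paper states this corollary without proof, as it follows by summing the per-task inequality of the preceding lemma over $i=1,\ldots,n$ and using the definitions $\work(t)=\sum_i\work_i(t)$ and $\awork(t)=\sum_i\awork_i(t)$). Nothing is missing.
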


\begin{lemma}
\label{lem:piecewise}
Let $f: \Nat \to \Nat$ be a piecewise linear function with $K$ pieces and assume we can compute $\lim_{t \to \infty} f(t)/t$. Then the value $\sup_{t \in \Nat} f(t)/t$ can be found by evaluating $f$ in $O(K)$ points. 
\end{lemma}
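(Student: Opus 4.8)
The plan is to reduce the supremum over all of $\Nat$ to a maximum over the $O(K)$ breakpoints of $f$, by exploiting the fact that on each linear piece the function $t \mapsto f(t)/t$ is monotone. Write the pieces of $f$ as consecutive intervals $I_1,\ldots,I_K$ (with $I_K$ possibly unbounded on the right), so that on $I_j \cap \Nat$ one has $f(t) = a_j t + b_j$ for suitable $a_j,b_j$. Then for $t \in I_j \cap \Nat$,
\[ \frac{f(t)}{t} = a_j + \frac{b_j}{t}, \]
which, as a function of the real variable $t>0$, is non-increasing when $b_j \ge 0$ and increasing when $b_j < 0$.

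First I would dispose of the $K-1$ bounded pieces. By the monotonicity just noted, $\sup_{t \in I_j \cap \Nat} f(t)/t$ is attained at one of the (at most two) integer endpoints of $I_j$. Collecting the endpoints of all pieces gives a set $P$ of $O(K)$ integers, and the supremum of $f(t)/t$ over $\bigcup_{j<K} I_j$ equals $\max_{u \in P} f(u)/u$.

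Next I would handle the unbounded piece $I_K$. If its data satisfies $b_K \ge 0$, then $f(t)/t$ is non-increasing on $I_K$ and its supremum there is the value at the left endpoint of $I_K$, already accounted for in (or added to) $P$. If $b_K < 0$, then $f(t)/t$ is increasing on $I_K$ and bounded above by $a_K = \lim_{t\to\infty} f(t)/t$, and its supremum over $I_K \cap \Nat$ is precisely this limit --- which is exactly the quantity we assumed we can compute. Either way, the last piece contributes at most one extra evaluation point together with the precomputed limit.

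Finally, taking the maximum of $\{f(u)/u : u \in P\} \cup \{\lim_{t\to\infty} f(t)/t\}$, a collection of $O(K)$ numbers obtained from $O(K)$ evaluations of $f$, yields $\sup_{t \in \Nat} f(t)/t$. The only point requiring any care --- and there is no genuinely hard step here --- is that $t$ ranges over the integers rather than the reals: one notes that a function monotone on a real interval remains monotone on the integers it contains, so the extremum is still attained at an integer endpoint (and if the breakpoints are themselves non-integral, one replaces each by the nearest integer lying in the corresponding piece). The statement is thus a routine consequence of piecewise linearity, with the hypothesis about the limit at infinity being exactly what is needed to treat the single unbounded piece.
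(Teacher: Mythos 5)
Your proof is correct and follows essentially the same route as the paper: observe that $f(t)/t$ is monotone on each linear piece, so the supremum is attained at piece endpoints, with the precomputed limit $\lim_{t\to\infty} f(t)/t$ covering the single unbounded piece. Your write-up is merely more explicit about the sign of the intercept and the integrality of the endpoints, which the paper leaves implicit.
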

\begin{proof}
Let $[a,b]$ be a piece of $f$, that is, a maximal interval in which $f$ is linear. Then $f(t)/t$ is monotone in $[a,b]$, so that $\max(f(a)/a,f(b)/b) \ge f(t)/t$ for all $t \in [a,b]$. Therefore, to compute $\sup_{t \in \Nat} f(t)/t$ it suffices to compute the value of $f$ in $K+1$ points (one of these ``points'' is $t=\infty$). 
\qed
\end{proof}

\vcom{I fixed the statement of the theorem}

\begin{theorem}
Let $\eps > 0$. There is a pseudopolynomial time EDF-schedulability test with speedup $2-1/m+\eps$. 
\end{theorem}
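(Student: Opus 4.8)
The goal is to assemble the pieces developed in this section into an algorithm and its analysis. The key realization is Corollary~\ref{cor:lambda}: it suffices to give a pseudopolynomial-time $(1+\eps)$-approximation of $\lambda_\tsys$. I would obtain such an approximation by working with $\awork(t) = \sum_i \awork_i(t)$ instead of $\work(t)$, since by Corollary~\ref{cor:apx} we have $\frac{1}{1+\eps}\work(t) \le \awork(t) \le \work(t)$ for all $t$, and hence $\frac{1}{1+\eps}\lambda_\tsys \le \sup_{t} \awork(t)/t \le \lambda_\tsys$. So the quantity $\hat\lambda_\tsys := \sup_{t \in \Nat} \awork(t)/t$ is exactly the kind of estimate Corollary~\ref{cor:lambda} asks for, provided we can compute it.

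\textbf{Computing $\sup_t \awork(t)/t$.} Here I invoke Lemma~\ref{lem:piecewise}: $\awork$ is piecewise linear (it is a finite sum of the piecewise-linear functions $\awork_i$), and by Corollary~\ref{cor:pieces} it has $K = O\!\big(\frac{1}{\eps}\sum_{i=1}^n |V_i| \cdot \max_i(1+D_i/T_i)\big)$ pieces. The limit $\lim_{t\to\infty}\awork(t)/t$ exists and equals $\sum_{i=1}^n \vol(G_i)/T_i$, which is directly computable; this handles the ``point at infinity.'' Lemma~\ref{lem:piecewise} then says $\sup_t \awork(t)/t$ is found by evaluating $\awork$ at $O(K)$ points, namely the breakpoints of the pieces. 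Each evaluation of $\awork_i(t)$ at a given $t$ requires, for $t$ below the threshold $T_i/\eps + (1+1/\eps)D_i$, summing the contributions of the at most $O(|V_i|\cdot t/T_i)$ relevant releases — which for $t$ up to that threshold is polynomially bounded in $|V_i|$, $1/\eps$, and $D_i/T_i$; above the threshold it is just the explicit linear formula. All breakpoint locations are similarly computable in pseudopolynomial time from the structure of the $G_i$, the $D_i$, and the $T_i$.

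\textbf{Putting it together.} Having computed $\hat\lambda_\tsys$ with $\frac{1}{1+\eps}\lambda_\tsys \le \hat\lambda_\tsys \le \lambda_\tsys$, the test is: first check the feasibility precondition $\lc(G_i) \le D_i$ for all $i$ (linear time, via the footnote to Lemma~\ref{lem:work}); if it fails, report infeasible. Otherwise, if $\hat\lambda_\tsys > m$ report ``infeasible on $m$ unit-speed processors,'' and if $\hat\lambda_\tsys \le m$ report ``EDF-schedulable on $m$ speed-$(2-1/m+\eps)$ processors.'' Correctness is exactly Corollary~\ref{cor:lambda} (equivalently, the preceding lemma with $\hat\lambda_\tsys$ in the role stated there). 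The running time is dominated by the $O(K)$ function evaluations, each pseudopolynomial, so the whole test is pseudopolynomial.

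\textbf{Main obstacle.} The substantive content is really all in the earlier lemmas; the only thing genuinely left to argue carefully is that $\awork$ can be \emph{constructed} — i.e., that its breakpoints and the slope/value on each piece can be enumerated — in pseudopolynomial time, and that $\lim_{t\to\infty}\awork(t)/t$ is computable as required by Lemma~\ref{lem:piecewise}. Both follow from the explicit description of $\work_i(t)$ (the release pattern with one job at $t_0+t-D_i-kT_i$ for each $k$) given just before Lemma~\ref{lem:work:bounds} and from the two-case definition of $\awork_i$, but it is worth spelling out so that "pseudopolynomial" is not merely asserted. I expect this bookkeeping to be the bulk of the proof, with everything else being a direct citation of results already in hand.
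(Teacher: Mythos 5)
Your proposal is correct and follows essentially the same route as the paper's own proof: combine Corollaries~\ref{cor:lambda}, \ref{cor:pieces}, \ref{cor:apx} with Lemma~\ref{lem:piecewise}, and then check that each $\awork_i(t)$ (and hence each of the $O(K)$ evaluations) can be carried out in pseudopolynomial time. Your additional remarks on computing $\lim_{t\to\infty}\awork(t)/t = \sum_i \vol(G_i)/T_i$ and enumerating the breakpoints are just a more explicit rendering of the bookkeeping the paper leaves implicit.
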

\begin{proof}
After combining Corollary \ref{cor:lambda}, Corollary \ref{cor:pieces}, Corollary \ref{cor:apx} and Lemma \ref{lem:piecewise}, it only remains to show that each $\awork_i(t)$ can be evaluated in pseudopolynomial time for any $t$. This is clear from the definition of $\awork_i$ when $t > T_i/\eps + (1+1/\eps)D_i$. When $t \le T_i/\eps + (1+1/\eps)D_i$, notice that there can be $O(1+D_i/T_i)$ dag-jobs that contribute only partially (less than $\vol(G_i)$) to $\awork_i(t)$. For each of them, the exact amount of contributed work can be computed in polynomial time. 
\qed
\end{proof}

\section{A simple sufficient condition for EDF-schedulability}\label{sec:simple-test}
We complement the result of the previous section with a sufficient condition for EDF-schedulability that can be easily checked in polynomial time.

Given  a sporadic DAG task system, w.l.o.g.~we assume that DAGs $G_i$ are ordered according to nondecreasing $D_i$
(breaking ties arbitrarily).

\begin{theorem}\label{thm:polytimeImproved}
Assume a sporadic DAG task system satisfies the following properties: 
\begin{enumerate}
\item $\lc(G_k)\le D_k/3$, $k=1,2, \ldots, n$,
\item for each $k$, $k=1,2, \ldots, n$, 
$$ \sum_{i: T_i \le D_k} \vol(G_i) /T_i  + \sum_{i: T_i > D_k} \vol(G_i) / D_k \le (m +1/2)/3.$$  
\end{enumerate}
Then the system is EDF-schedulable on $m$ unit-speed processors. 
\end{theorem}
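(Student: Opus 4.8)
The plan is to argue by contradiction: assume that EDF, run on $m$ unit-speed processors, misses a deadline on some job sequence $J\in\mathrm{gen}(\tsys)$, and show that this forces property~2 to fail for some index $k$. Observe first that property~1 gives $\lc(G_i)\le D_i$ for every $i$, so $\tsys$ is feasible on a large enough number of unit-speed processors; hence the greedy schedule $\sinf$ (as in the proof of Lemma~\ref{lem:EDF-speeds}) completes every job within its deadline, and, since $d$ below is the \emph{first} missed deadline, so does EDF on every job with deadline $<d$. This fact is what lets us count demand.

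Let $d$ be the earliest instant at which EDF misses a deadline, let $j$ be a job missing its deadline at $d$, let $\tau_k$ be the task that released $j$, and let $r:=d-D_k$ be the release time of $j$'s dag-job. Walking backward from $j$ through the predecessors in $G_k$ that are not yet complete --- at each step passing to the predecessor that completed last --- produces a chain $v_1,v_2,\ldots,v_q=j$ with the property that at every instant of $[r,d]$ either some $v_i$ is being executed by EDF, or the current ``active'' chain job is eligible but all $m$ processors are busy; in the latter case all $m$ running jobs must have deadline $\le d$, since otherwise EDF would preempt one of them in favour of the deadline-$d$ chain job. As the $v_i$ lie on a chain they execute over pairwise disjoint intervals, so the ``chain running'' part of $[r,d]$ has total length at most $\sum_i p_{v_i}\le\lc(G_k)\le D_k/3$ by property~1. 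Consequently, the amount of work EDF performs during $[r,d]$ on jobs with deadline $\le d$ is at least $m(D_k-D_k/3)+1\cdot(D_k/3)=(2m+1)D_k/3$: outside the chain-running part all $m$ processors run deadline-$\le d$ jobs, and inside it at least the chain job (deadline $d$) does.

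On the other hand, I bound this work from above by a demand argument at scale $D_k$. Every dag-job on which EDF does positive work in $[r,d]$ and whose deadline is $\le d$ has its deadline inside $(r,d]$ (a deadline before $r$ would already be met, hence complete, before $r$). Since releases of $\tau_i$ are at least $T_i$ apart, the number of such dag-jobs of $\tau_i$ is at most $\lceil D_k/T_i\rceil$ --- which is $\le D_k/T_i+1$ when $T_i\le D_k$ and equals $1$ when $T_i>D_k$ --- and each contributes at most $\vol(G_i)$ (cf.\ the bound \eqref{eq:work:ub} of Lemma~\ref{lem:work:bounds}, now including ``carry-in'' dag-jobs released before $r$). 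Combining the two estimates and dividing by $D_k$ yields
\[
 \frac{2m+1}{3}\;\le\;\sum_{i:\,T_i\le D_k}\frac{\vol(G_i)}{T_i}\;+\;\sum_{i:\,T_i>D_k}\frac{\vol(G_i)}{D_k}\;+\;R,
\]
where $R$ collects the rounding/carry-in error $\sum_{i:\,T_i\le D_k}\bigl(\lceil D_k/T_i\rceil-D_k/T_i\bigr)\vol(G_i)/D_k$. Since $\frac{2m+1}{3}-\frac{m+1/2}{3}=\frac{m+1/2}{3}$, this contradicts property~2 for this $k$ provided $R<\frac{m+1/2}{3}$, which is exactly the $\tfrac12$-slack built into the right-hand side of property~2.

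\textbf{Main obstacle.} The construction of the blocking chain and the saturation lower bound on EDF's work are routine; the delicate point --- on which the precise constant $(m+1/2)/3$ hinges --- is the upper bound, namely controlling the ``carry-in'' dag-jobs (released before $r$ but with deadline in $(r,d]$, which for tasks with $D_i>T_i$ need not yet be complete) and the ceiling terms, so that the error $R$ is genuinely absorbed by the $\tfrac12$-slack rather than merely bounded by it; this may require counting deadlines in the half-open interval more carefully, or slightly extending the window backward past $r$, or exploiting that a carry-in dag-job contributes strictly less than $\vol(G_i)$ inside $[r,d]$. The role of property~1 for the indices $i\ne k$ is to keep every DAG ``thin'' enough that this carry-in can be charged against the slack, in effect reducing the analysis to a demand-bound-function test for an ordinary (non-DAG) sporadic task system.
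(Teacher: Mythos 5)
Your proposal follows essentially the same route as the paper's proof: contradiction at the first missed deadline, the critical-chain argument bounding the non-saturated time in $[r,d]$ by $\lc(G_k)\le D_k/3$, the resulting lower bound $(2m+1)D_k/3$ on the work EDF performs in the window, and the demand upper bound of $\lceil D_k/T_i\rceil\,\vol(G_i)$ per task with carry-in dag-jobs each charged a full $\vol(G_i)$. The ``main obstacle'' you flag dissolves immediately: for every $i$ in the first sum we have $T_i\le D_k$, so each rounding term satisfies $\lceil D_k/T_i\rceil-D_k/T_i<1\le D_k/T_i$, whence $R\le\sum_{i:T_i\le D_k}\vol(G_i)/D_k\le S$ and therefore $S+R\le 2S\le(2m+1)/3$ --- this is precisely the paper's one-line use of $\lceil x\rceil\le 2x$ for $x\ge1$, and it bounds $R$ by the whole left-hand side of property~2, not merely by the $\tfrac12$-slack as you suggest; the strict inequality needed to close the contradiction then comes from the fact that EDF leaves part of $j$ unfinished at $d$, so its work in $[r,d]$ is strictly below the total demand.
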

\begin{proof}
Suppose by contradiction that EDF fails to meet some deadline while scheduling some sequence of dag-jobs released by a sporadic task $\tau_k$. 
Let $j$ be the first job of DAG $G_k$ that misses its deadline $d_{j}$. W.l.o.g.
we assume that there are no jobs with a deadline later than $d_{j}$.
Consider the interval $I:=[r_{j},d_{j})$. 
Denote by $X$ the total amount of time during $I$ where all processors are busy. Let $Y:=(d_{j}-r_{j})-X=D_k-X$, i.e., $Y$ denotes the total amount of time in $I$ during which not all processors are busy. 

We  first observe that $Y\le D_k/3$. This follows from the observation that whenever a processor is idle, EDF must be executing a job belonging  the longest chain of the last activation of $G_k$ and hence $Y \le \lc(G_k)$, which is assumed to be at most $D_k/3$.  

Condition $Y\le  D_k/3$ implies that  $X\ge 2D_k/3$.  Now since the total amount of execution occurring over the interval $I$ is greater or equal to $(mX+Y)$, we conclude that 
%
the total work done by EDF during $I$ is greater or equal to $(2m+1) D_k /3 $.

Now recall \eqref{eq:work:ub} and observe that the total amount of work due in $I$ is bounded above by

\begin{eqnarray*}
& & \sum_{i: T_i \le D_{k}} \left\lceil \frac{D_k}{T_i}\right\rceil \vol(G_i)  + 
 \sum_{i: T_i > D_{k}}\vol(G_i) \\
& &  \le  2D_k   \left(\sum_{i: T_i \le D_{k}} \vol(G_i) /T_i  + \sum_{i: T_i > D_{k}}\vol(G_i)/ D_k \right) \\
& &  \le \frac{2m+1}{3} D_k 
\end{eqnarray*}
where we have used the fact that $\ceil{x} \le 2x$ when $x \ge 1$. 
This contradicts the assumption that EDF fails and completes the proof of the theorem.
\qed
\end{proof}


\bibliographystyle{abbrv-plus}
\bibliography{refs}

\end{document}